\newcommand{\rk}{\mathrm{rank}\,}
\newcommand{\Chol}{\Sigma_{tr}}
\newcommand{\Qtj}{\tilde{Q}_j}
\newcommand{\fA}{f(A_0,A_+)}
\newcommand{\jn}{j=1,\ldots,n}
\newcommand{\e}{\varepsilon}
\newcommand{\Atot}{\left(A_0,A_+\right)}
\newcommand{\Vj}{\mathcal{V}_j}
\newcommand{\Vjo}{\mathcal{V}_j^\bot}
\newcommand{\sPi}{\mathcal{P}_i}
\newcommand{\sPio}{\mathcal{P}_i^\bot}
\newcommand{\sH}{\mathcal{H}}
\newcommand{\PRt}{\tilde{\ensuremath\mathbb{P}}^R}
    \renewcommand{\O}{\ensuremath\mathcal{O}\left(n\right)}
    \newcommand{\O}{\ensuremath\mathcal{O}\left(n\right)}
    \renewcommand{\Re}{\ensuremath\mathbb{R}}
    \newcommand{\Re}{\ensuremath\mathbb{R}}
    \renewcommand{\A}{\ensuremath\mathcal{A}_0(\phi)}
    \newcommand{\A}{\ensuremath\mathcal{A}_0(\phi)}
    \renewcommand{\Ar}{\ensuremath\mathcal{A}_0^r(\phi)}
    \newcommand{\Ar}{\ensuremath\mathcal{A}_0^r(\phi)}
    \renewcommand{\dim}{\mathrm{dim}\,}
    \renewcommand{\dim}{\mathrm{dim}\,}
\def\thesection{\Roman{section}}
\newtheorem{theo}{Theorem}%[section]
\newtheorem{prop}{Proposition}%[section]
\newtheorem{lemma}{Lemma}%[section]
\newtheorem{algo}{Algorithm}%[section]
\theoremstyle{definition}
\newtheorem{defin}{Definition}
\newtheorem{condition}{Condition}
\newtheorem{ex}{Example}
\date{}
\begin{document}

\title{\huge On global identification in structural vector autoregressions\thanks{We thank Thomas Carr, Giuseppe Cavaliere, Luca Fanelli, Michele Piffer, Majid Al-Sadoon and Matthew Read for beneficial discussions and comments. Financial support from the ESRC through the ESRC Centre for Microdata Methods and Practice (CeMMAP)
(grant number RES-589-28-0001) and the European Research Council (Starting grant No. 715940) is gratefully acknowledged. Emanuele Bacchiocchi gratefully acknowledges financial support
from Italian Ministry of University and Research (PRIN 2022, Grant 20229PFAX5) and the University of Bologna (RFO grants). The data that support the findings of this study are openly available in [repository name] at [URL], reference number [reference number].}}

\author{%
\begin{tabular}{ccc}
Emanuele Bacchiocchi\thanks{University of Bologna, Department of Economics. Email: e.bacchiocchi@unibo.it} &  & 
Toru Kitagawa\thanks{Brown University, Department of Economics. Email: toru\_kitagawa@brown.edu} \\ 
University of Bologna &  & Brown University \\ 
\end{tabular}%
}

% \vspace{1cm}

\date{This draft: 7 July 2026}
\maketitle

\begin{abstract}
In a landmark contribution to the structural vector autoregression (SVARs) literature,  Rubio-Ram\'{i}rez, Waggoner, and Zha (2010, `Structural Vector Autoregressions: Theory of Identification and Algorithms for Inference,' \textit{Review of Economic Studies}) show necessary and sufficient conditions for equality restrictions to globally identify the structural parameters of an SVAR. Among them, the sufficient condition shown in their Theorem 7 is the simplest and most attractive for practitioners, reducing the check for global identification to a counting exercise about the number of zero restrictions imposed. 
However, their findings build on a set of regularity assumptions, one of which states that
the first derivative of the function transforming the parameters to be constrained must have full rank. We show, through empirically relevant examples, this assumption can fail in practice and, if so, how naive applications of the necessary and sufficient condition shown in their Theorem 7 fails to check global identification. We derive a modified and more general necessary and sufficient condition for SVAR global identification and show how it can be easily applied in practice.
\end{abstract}

%\vspace{1cm}

\begin{flushleft}
\textit{Keywords}: Structural Vector Autoregression, exclusion restrictions, redundant restrictions. \newline
\bigskip
\textit{JEL codes}: C01,C13,C30,C51.
\end{flushleft}

\newpage

%%%%%%%%%%%%%%%%%%%%%%%%%%%  INTRODUCTION  %%%%%%%%%%%%%%%%%%%%%%%%%%%%%%%%%%%%%%%%%%%

\section{Introduction}
\label{sec:intro}

\cite{RWZ10} (henceforth RWZ) provide necessary and sufficient conditions for the global identification of structural parameters in Structural Vector Autoregressions (SVARs) under a general class of zero restrictions imposed on the structural parameters and their (non-)linear transformations, including impulse responses. Exploiting the insights of their global identification analysis, RWZ also develop efficient and practical  algorithms to perform estimation and inference for structural parameters and impulse responses. Their analytical and computational innovations have been instrumental to recent developments in the literature, including set-identified SVARs (\citet{ARW18}, \citet{GK18}, \citet{GKR19}, \citet{Volpicella20}, \citet{AD21}), locally-identified SVARs (\citet{BK20}), and SVARs with narrative restrictions (\citet{AR18}, \citet{GKR21}), to list a few. RWZ provide several different versions of the necessary and sufficient conditions for global identification. The one given in Theorem 7 is the simplest and most attractive for practitioners, which reduces the check for global identification to a counting exercise about the number and pattern of imposed zero restrictions without requiring knowledge of the true value of the structural or reduced-form parameters. For instance, in \citet{ACR19} and \citet{Zviadadze17}, the authors apply Theorem 7 of RWZ to judge whether the imposed identifying restrictions deliver global identification or not.

However, their rank condition builds on some technical assumptions that should be checked in advance. Through a reasonable example, we show the importance of such 
assumptions and to what extent their failure can lead to missleading results, making the rank condition no longer sufficient. An analytical investigation of this example reveals why it does not guarantee global identification. We find that the condition of Theorem 7 of RWZ cannot detect what we refer to as \textit{redundancy} of imposed identifying restrictions. In this phenomenon, a set of equality restrictions on the structural parameters or impulse responses implicitly forces other structural parameters or impulse responses to zero. If it is present, some (redundant) zero restrictions are already implied by other imposed equality restrictions, so they do not contribute any further identifying information to the system. The condition of Theorem 7 of RWZ, without a prior check of the basic technical assumptions, incorrectly counts the redundant identifying restrictions as if they reduced the dimension of the admissible structural parameters, resulting in an erroneous conclusion that the model is globally identified. We argue that the redundancy of the identifying restrictions is relevant for empirical applications, rather than being of pure theoretical interest.

In the present paper we provide a new necessary and sufficient condition for (exact) global identification that correctly discounts redundant identifying restrictions, and can be used under weaker assumptions than in RWZ. RWZ propose a useful algorithm that sequentially constructs an orthonormal matrix for structural parameter identification that satisfies the identifying restrictions. Building on and modifying their algorithm, our proposed necessary and sufficient condition for global identification checks for the existence of redundant restrictions by verifying whether the orthonormal matrix generated by this sequential algorithm is unique. Verifying uniqueness boils down to checking the rank of a sequence of matrices constraining each column of the orthonormal matrix. Our algorithm only requires values of the reduced-form parameters as an input and checks for global identification at that particular point of the parameter space. Our approach is simpler in practice than analytically checking a priori the theoretical assumptions in RWZ, and, furthermore, applicable to cases even when they violate their regularity assumptions. 

As an alternative to their Theorem 7, Theorem 1 in RWZ presents a different form of necessary and sufficient condition for global identification. As we illustrate in this paper, its proper implementation requires a complete understanding of how the imposed identifying restrictions analytically constrain the impulse responses and the set of structural parameters. For instance, if redundant identifying restrictions are present but one is not aware which zero restrictions can be implied by others, naive implementation of the rank conditions in Theorem 1 of RWZ may also overlook a lack of global identification. To prevent this, it is important to analytically ascertain how a set of equality restrictions translates to zero restrictions for other structural objects. This is feasible for small scale SVARs, but can be less straightforward for medium or large scale SVARs. In contrast, checking our necessary and sufficient condition remains tractable and attractive even for moderate- to large scale SVARs. 

We also show, through an empirical implementation of our methodology, to what extent it can find fertile ground in the growing literature on proxy SVARs.

The remainder of the paper is organized as follows. We first introduce the model and notation in Section II. In Section \ref{sec:ex}, we present an example that
sheds light on Theorem 7 of RWZ. In Section \ref{sec:id} we define the notion of redundant identifying restrictions and provide a modified necessary and sufficient condition for (exact) global identification. Section \ref{sec:exs} presents two examples based on influential papers in the SVAR literature. Section \ref{sec:conclusion} concludes.

%%%%%%%%%%%%%%%%%%%%%%%%%%%  ECONOMETRIC MODEL  %%%%%%%%%%%%%%%%%%%%%%%%%%%%%%%%%%%%%%%%%%%
\section{Model}
\label{sec:def}
%%%%%%%%%%%%%%%%%%%%%%%%%%%%%%%%%%%%%%%%%%%%%%%%%%%%%%%%%%%%%%%%%%%%%%%%%%%%%%%%%%%%%%%%%%%

We maintain the notation used in RWZ. Let $y_t$ be a $n\times 1$ vector of variables observed over the sample $t=1,\ldots, T$. The specification of the SVAR model is
\begin{equation}
\label{eq:SVAR}
        y_t^\prime A_0  = \sum_{l=1}^{p}y_{t-l}^\prime A_l + c +\e_t^\prime,
\end{equation}
where $\e_t$ is a $n\times 1$ multivariate normal white noise process with null expected value and covariance matrix equal to the identity matrix $I_n$. The $n\times n$ matrices $A_0,\,A_1,\ldots,\,A_p$ are the structural parameters and $c$ is a $1\times n$ vector of constant terms. The structural parameters are $(A_0,A_+)$, where $A_+^\prime\equiv (A_1^\prime,\ldots,\,A_l^\prime,\,c^\prime)$ is a $n\times m$ matrix with $m\equiv np+1$. We also assume that the initial conditions $y_1,\,\ldots,\, y_p$ are given and that $A_0$ is invertible. The set of structural parameters is denoted by $\ensuremath\mathbb{P}^S$, an open dense set of $\Re^{(n+m)n}$. The structural form can be written compactly as 
\begin{equation}
\label{eq:SVARc}
        y_t^\prime A_0  = x_{t}^\prime A_+ + \e_t^\prime
\end{equation}
where $x_t^\prime=\left(y_{t-1}^\prime,\ldots,\,y_{t-p}^\prime,\,1\right)$.

The reduced-form representation of (\ref{eq:SVARc}) is the standard VAR model,
\begin{equation}
\label{eq:VARc}
        y_t^\prime = x_{t}^\prime B + u_t^\prime,
\end{equation}
where $B=A_+A_0^{-1}$, $u_t^\prime=\e_t^\prime A_0^{-1}$, and $E(u_t\,u_t^\prime)=\Sigma=(A_0A_0^{\prime})^{-1}$. The reduced-form parameters are $(B,\,\Sigma)$, where $\Sigma$ is a symmetric and positive definite matrix. We denote the set of reduced-form parameters by 
$\ensuremath\mathbb{P}^R \subset \mathbb{R}^{nm+n(n+1)/2}$.
The relationship between the structural and reduced-form parameters is defined by the function 
$g:\ensuremath\mathbb{P}^S\rightarrow\ensuremath\mathbb{P}^R$, where $g\Atot=(A_+A_0^{-1},(A_0A_0^\prime)^{-1})$.

The definition of global identification is the standard one provided by \cite{Rothenberg71ECTA}; the absence of 
observationally equivalent parameters in the parametric space. We consider identification of the structural parameters by imposing zero restrictions on 
a transformation $f(\cdot)$ of the structural parameter space into the set of $k \times n$ matrices, $k \geq 1$, with domain $U\subset \ensuremath\mathbb{P}^S$. Such linear restrictions are represented by
\begin{equation}
\label{eq:restr}
        Q_j\fA e_j=0,\hspace{1cm}\text{for } \jn. 
\end{equation}
where $Q_j$ is a $k\times k$ selection matrix for $\jn$, and $e_j$ is the \textit{j}-th column of the $n\times n$ identity matrix $I_n$. 
The rank of $Q_j$ is denoted by $q_j$, which also represents the number of restrictions in the \textit{j}-th column of the transformed space
$\fA$. As in RWZ, we order the columns of $\fA$ according to 
\begin{equation}
\label{eq:ordering}
        q_1\geq q_2\geq\ldots\geq q_n. 
\end{equation}
We denote the set of orthonormal matrices by $\mathcal{O}(n)$ with generic element $P$.

According to RWZ, the transformation is admissible when the following condition holds.

\begin{condition}
\label{def:adm}
        The transformation $f(\cdot)$, with the domain $U$, is admissible if and only if for any $P\in\O$ and $\Atot\in U$, $f(A_0 P,A_+ P)=\fA P$.
\end{condition}

Moreover, RWZ impose the following two conditions when proving some of their results, the former of which is at the heart of the present paper. 

\begin{condition}
\label{def:reg}
        The transformation $f(\cdot)$, with the domain $U$, is regular if and only if $U$ is open and $f$ is continuously differentiable with $f^\prime \Atot$
        of rank $kn$ for all $\Atot \in U$.
\end{condition}

\begin{condition}
\label{def:streg}
        The transformation $f(\cdot)$, with the domain $U$, is strongly regular if and only if it is regular and $f(U)$ is dense in the set of 
        $k\times n$ matrices.
\end{condition}

To fix the sign of structural shocks, we need to impose sign normalization rules. Following RWZ, we define them as follows:

\begin{defin}[Normalization rule]
\label{def:norm}
        A normalization rule can be characterized by a set $N\subset \ensuremath\mathbb{P}^S$ such that for any structural parameter point 
        $\Atot\subset \ensuremath\mathbb{P}^S$, there exists a unique $n\times n$ diagonal matrix $D$ with plus or minus ones along the diagonal 
        such that $(A_{0}D,A_+D)\in N$.
\end{defin}

We are now able to define the set of restricted structural parameters as
\begin{equation}
\label{eq:R}
        R = \Big\{\Atot\in U\cap N\,\Big|\,Q_j\fA e_j=0\text{ for }\jn\Big\}.
\end{equation}
Following RWZ, we consider the following definition of identification when discussing whether or not the imposed restrictions can globally identify the structural parameters. 

\begin{defin}[Exact identification]
\label{def:exact}
        Consider an SVAR with restrictions represented by $R$. The SVAR is exactly identified if and only if, for almost any reduced-form parameter 
        point $(B,\Sigma)$, there exists a unique structural parameter point $\Atot\in R$ such that $g\Atot=(B,\Sigma)$.\footnote{In this definition, if the set of structural parameters under the restrictions $R$ constrains the reduced-form parameters, the domain of the reduced-form parameters for which the almost-sure property is required is restricted to $\PRt \subset \ensuremath\mathbb{P}^R$, where $\tilde{\mathbb{P}}^R$ is the set of reduced-form parameters generated by the structural parameters satisfying $R$. For instance, if $f(\cdot)$ maps
        the structural parameters to long-run impulse responses, its domain $U$ restricts the reduced-form VARs to being invertible. Then, $\tilde{\mathbb{P}}^R$ corresponds to the set of reduced-form parameters constrained to invertible VARs.}
\end{defin}

%%%%%%%%%%%%%%%%%%%%%%%%%%%%%%%%%%%%%  EXAMPLE  %%%%%%%%%%%%%%%%%%%%%%%%%%%%%%%%%%%%%%%%%%%
\section{An illustrative example}
\label{sec:ex}
%%%%%%%%%%%%%%%%%%%%%%%%%%%%%%%%%%%%%%%%%%%%%%%%%%%%%%%%%%%%%%%%%%%%%%%%%%%%%%%%%%%%%%%%%%%

In the setting described in the previous section, RWZ show a variety of necessary and sufficient conditions for the identifying restrictions $R$ with admissible $f(\cdot)$ to globally identify the structural parameters. Among those, the necessary and sufficient condition for exact identification presented in Theorem 7 of RWZ is the simplest and most attractive in practice, as it reduces verification of exact identification to a simple exercise of computing the rank of the matrices $Q_j$, $1 \leq j \leq n$. So that our exposition is self-contained, we present Theorem 7 of RWZ here:    

\bigskip

\noindent \textbf{Theorem 7 in RWZ}: \textit{Consider an SVAR with admissible and strongly regular restrictions represented by $R$.\footnote{Admissible and strongly regular restrictions represented by $R$ mean $f(\cdot)$ in (\ref{eq:R}) is admissible and strongly regular.} The SVAR is exactly identified if and only if $q_j = n - j$ for $1 \leq j \leq n$. }

\bigskip

The first result in this paper is that the assumption in Condition \ref{def:reg}, that looks like a rather technical one, and thus off consideration by practitioners,
can invalidate the ``if'' statement of Theorem 7 in RWZ, as shown by the following example.
 
\subsection{An example}
\label{sec:contr}

Consider a trivariate SVAR characterized by the following restrictions
\begin{equation}
\label{eq:ExRestr}
\begin{array}{ccc}
A_0 = \left(\begin{array}{ccc}
a_{11} & a_{12} & a_{13}\\
0 & a_{22} & a_{23}\\
0 & a_{32} & a_{33}
\end{array}\right) & \text{ and } &
IR_0 = \left(\begin{array}{ccc}
\times & 0 & \times\\
\times & \times & \times\\
\times & \times & \times
\end{array}\right)
\end{array}
\end{equation}
where $IR_0=(A_0^{-1})^{\prime}$ is the contemporaneous impulse response matrix, the symbol `$\times$' indicates that no restriction is imposed, and `0' represents
a zero (or exclusion) restriction.
The function $\fA$ will be\footnote{A more parsimonious way to define the transformation function could be to select just a subset of rows in the matrix in Eq. (\ref{eq:Exf}). This  alternative way, however, would invalidate the previous Condition \ref{def:adm}. For this reason we define $\fA$ by stacking the entire $A_0$ and $IR_0$ matrices.}
\begin{equation}
\label{eq:Exf}
        \fA=\left(\begin{array}{c}A_0\\IR_0\end{array}\right)=
        \left(\begin{array}{ccc}
        a_{11} & a_{12} & a_{13}\\
        0 & a_{22} & a_{23}\\
        0 & a_{32} & a_{33}\\
        \times & 0 & \times\\
        \times & \times & \times\\
        \times & \times & \times
        \end{array}\right).
\end{equation}
The matrices of restrictions defined in (\ref{eq:restr}) can be specified as
\begin{equation}
\label{eq:QEx}
        \begin{array}{lcr}
        Q_1 = \left(\begin{array}{cccccc}
        0 & 1 & 0 & 0 & 0 & 0\\
        0 & 0 & 1 & 0 & 0 & 0\\
        \hdashline[2pt/2pt]
        0 & 0 & 0 & 0 & 0 & 0\\
        0 & 0 & 0 & 0 & 0 & 0\\
        0 & 0 & 0 & 0 & 0 & 0\\
        0 & 0 & 0 & 0 & 0 & 0
        \end{array}\right), & \hspace{2cm} &
        Q_2 = \left(\begin{array}{cccccc}
        0 & 0 & 0 & 1 & 0 & 0\\
        \hdashline[2pt/2pt]
        0 & 0 & 0 & 0 & 0 & 0\\
        0 & 0 & 0 & 0 & 0 & 0\\
        0 & 0 & 0 & 0 & 0 & 0\\
        0 & 0 & 0 & 0 & 0 & 0\\
        0 & 0 & 0 & 0 & 0 & 0
        \end{array}\right).
        \end{array}
\end{equation}

According to Theorem 7 in RWZ, the SVAR is exactly (globally) identified, as the rank of the restriction matrices follows $q_1=n-1=2$, $q_2=n-2=1$ and $q_3=n-3=0$. However, analytical investigation shows the current set of identifying restrictions fails to achieve global identification. 

Let us express the reduced-form covariance matrix and its Cholesky decomposition as
\begin{equation}
\label{eq:ExSigma}
\begin{array}{ccc}
\Sigma = \left(\begin{array}{ccc}
\sigma_{11} & \sigma_{21} & \sigma_{31}\\
\sigma_{21} & \sigma_{22} & \sigma_{32}\\
\sigma_{31} & \sigma_{32} & \sigma_{33}
\end{array}\right) & \Rightarrow &
\Sigma_{tr} = \left(\begin{array}{ccc}
l_{11} & 0 & 0\\
l_{21} & l_{22} & 0\\
l_{31} & l_{32} & l_{33}
\end{array}\right).
\end{array}
\end{equation}
Imposing triangularity on $A_0$, we can obtain $A_0$ and $IR_0=A_0^{-1\prime}$ as
\begin{equation}
\label{eq:ExPar}
\begin{array}{ccc}
A_0^\prime = \Sigma_{tr}^{-1} = \left(\begin{array}{ccc}
\frac{1}{l_{11}} & 0 & 0\\
-\frac{l_{21}}{l_{11}l_{22}} & \frac{1}{l_{22}} & 0\\
\frac{l_{21}l_{32}-l_{22}l_{31}}{l_{11}l_{22}l_{33}} & -\frac{l_{32}}{l_{22}l_{33}} & \frac{1}{l_{33}}
\end{array}\right) & \Rightarrow &
IR_{0}= A_0^{-1\prime} = \left(\begin{array}{ccc}
l_{11} & 0 & 0\\
l_{21} & l_{22} & 0\\
l_{31} & l_{32} & l_{33}
\end{array}\right).
\end{array}
\end{equation}
Consider applying Algorithm 1 in RWZ to determine an orthogonal matrix $P$ that maps the $\Atot$ parameters under triangularity to the one satisfying the imposed restrictions. 

First, $\fA$  is
\begin{equation}
\label{eq:Exfnum}
        \fA=\left(\begin{array}{c}A_0\\IR_0\end{array}\right)=
        \left(\begin{array}{ccc}
        \frac{1}{l_{11}}  & -\frac{l_{21}}{l_{11}l_{22}} & \frac{l_{21}l_{32}-l_{22}l_{31}}{l_{11}l_{22}l_{33}}\\
        0 & \frac{1}{l_{22}} & -\frac{l_{32}}{l_{22}l_{33}}\\
        0 & 0 & \frac{1}{l_{33}}\\
        l_{11} & 0 & 0\\
        l_{21} & l_{22} & 0\\
        l_{31} & l_{32} & l_{33}
\end{array}\right).
\end{equation}
As in RWZ, let $\bar{Q}_1$ and $\bar{Q}_2$ be the matrices of indicators for the restricted elements of $\fA$ obtained by removing the row vectors of zeros from $Q_1$ and $Q_2$. Algorithm 1 in RWZ suggests calculating
\begin{equation}
\label{eq:Q1t}
        \tilde{Q}_1=\bar{Q}_1\fA=\left(\begin{array}{ccc}0 & \frac{1}{l_{22}} & -\frac{l_{32}}{l_{22}l_{33}}\\
                                                        0 & 0 & \frac{1}{l_{33}}\end{array}\right),
\end{equation}
and finding a unit-length vector that is orthogonal to the row vectors of $\tilde{Q}_1$. The QR decomposition of $\tilde{Q}_1$ and a sign normalization lead to $p_1 = (1,0,0)'$ as a unique unit vector satisfying $\tilde{Q}_1 p_1 = 0$, so we can pin down the first column vector of $P$.  

Next, to find the second column vector $p_2$ of $P$, we form the matrix
\begin{equation}
\label{eq:Q2t}
        \tilde{Q}_2=\left(\begin{array}{c}\bar{Q}_2\fA\\p_1^\prime\end{array}\right)
                                                 =\left(\begin{array}{ccc}l_{11} & 0 & 0\\ \hdashline[2pt/2pt] 1 & 0 & 0\end{array}\right)
\end{equation}
and search for a unit vector $p_2$ satisfying $\tilde{Q}_2\,p_2=0$. Since the rank of $\tilde{Q}_2$ is one for any value of $l_{11}$, we cannot pin down a unique $p_2$ (up to the sign normalization). From a geometric point of view, any vector belonging to the unit circle 
in $\Re^3$ orthogonal to the unit vector $p_1=\left(\begin{array}{ccc}1 & 0 & 0\end{array}\right)^\prime$ is admissible as $p_2$. This implies that given any reduced-form parameter value of $\Sigma$, the imposed restrictions fail to pin down a unique orthogonal matrix $P$, implying that, contrary to the claim in Theorem 7 of RWZ, global identification does not hold in this example. 

Some packaged algorithms for the QR decomposition, including the Matlab function $qr(\cdot)$, yield an orthogonal vector $p_2$ irrespective of whether it is unique or not. That is, if $\tilde{Q}_{2}$ is not full-rank, these algorithms implicitly select one unit vector $p_2$ from infinitely many admissible ones. As a result, an application of the ``if'' statement of Theorem 7 and naive implementation of Algorithm 1 in RWZ may fail to detect the failure of global identification and mislead subsequent impulse response analysis.

\subsection{Analytical investigation}
\label{sec:analytic}

A correct implementation of the RWZ procedure for checking global identification would have suggested an immediate warning, being the transformation function 
$\fA$ not regular. In fact, the domain of $\fA$ is the set of $n\times n$ matrices $A_0$ (being $IR_0=A_0^{-1}$), while its domain has dimension $2n\times n$.
According to Condition \ref{def:reg}, the transformation $f(\cdot)$, in order to be regular, must have the first derivative of rank $kn$. In the specific example,
the rank should be $nk = 2n^2$. However, the first derivative $f^\prime$ will be of dimension $2n^2\times n^2$, and thus the rank will be at most $n^2$, that is 
clearly less than $2n^2$. The identification scheme proposed in the example, thus, cannot be investigated by the RWZ procedure, although it is not so unrealistic from
an empirical point of view. In fact, combining zero restrictions on the contemporaneous relations among the endogenous variables and on the response on impact to
structural shocks could be a quite standard strategy for practitioners. In the following sections we will introduce a strategy to check for global identification even for models not admissible according to Condition \ref{def:reg}.

However, to understand why in this particular case the ``if'' statement of Theorem 7 of RWZ breaks down and how it can be modified, it is useful to determine analytically the special feature of the identifying restrictions specified in Eq. (\ref{eq:ExRestr}). 

We begin with the inversion of the $A_0$ matrix; the determinant of $A_0$ is
\begin{equation}
\label{eq:det}
        |A_0| = a_{11}a_{22}a_{33}+a_{12}a_{23}a_{31}+a_{13}a_{21}a_{32}-a_{13}a_{22}a_{31}-a_{11}a_{23}a_{32}-a_{12}a_{21}a_{33}
\end{equation}
and the adjunct matrix is
\begin{equation}
\label{eq:Adj}
                \text{Adj}(A_0) = \left(\begin{array}{ccc}
                a_{22}a_{33}-a_{32}a_{23} & -(a_{12}a_{33}-a_{32}a_{13}) & a_{12}a_{23}-a_{22}a_{13}\\
                -(a_{21}a_{33}-a_{31}a_{23}) & a_{11}a_{33}-a_{31}a_{13} & -(a_{11}a_{23}-a_{21}a_{13})\\
                a_{21}a_{32}-a_{31}a_{22} & -(a_{11}a_{32}-a_{31}a_{12}) & a_{11}a_{22}-a_{21}a_{12}
                \end{array}\right).
\end{equation}
The inverse is $A_0^{-1}=|A_0|^{-1}\,\text{Adj}(A_0)$. Substituting the two zero restrictions on $A_0$, $a_{21}=0$ and $a_{31}=0$, into $A_0^{-1\prime}$ leads to
\begin{equation}
\label{eq:AinvR}
                A_0^{-1\prime} = \frac{1}{a_{11}(a_{22}a_{33}-a_{23}a_{32})}\left(\begin{array}{ccc}
                a_{22}a_{33}-a_{32}a_{23} & 0 & 0\\
                -(a_{12}a_{33}-a_{32}a_{13}) & a_{11}a_{33} & -(a_{11}a_{32}-a_{31}a_{12})\\
                a_{12}a_{23}-a_{22}a_{13} & -a_{11}a_{23} & a_{11}a_{22}
                \end{array}\right)=IR_0.
\end{equation}
It is evident that the two restrictions on $A_0$ imply two zero restrictions on $IR_0$, $(A_{0}^{-1 \prime})_{[1,2]} = (A_{0}^{-1 \prime})_{[1,3]}=0$. One of these, $(A_{0}^{-1\prime})_{[1,2]}=0$, is exactly the zero restriction specified for $IR_0$ in (\ref{eq:ExRestr}). 
In other words, we intended to impose the three restrictions, but the two imposed on $A_0$ imply the third imposed on $IR_0$, 
so this third restriction was redundant. Due to this redundancy, the third restriction does not further constrain the admissible orthonormal matrix $P$, which translates into rank deficiency of $\tilde{Q}_2$. 

Although this redundancy phenomenon can occur in some realistic applications,\footnote{Many
influential empirical papers combine restrictions on both contemporaneous relationships among the endogenous variables 
and the contemporaneous impulse responses. Examples include \cite{Blanchard89}, \cite{BlanPerotti02}, \cite{Bernanke86}.} 
whether or not any of the imposed set of restrictions are redundant cannot be directly assessed by the simple necessary and sufficient condition in Theorem 7 of RWZ. As a way to uncover such redundancy, one may want to examine how a set of zero restrictions imposed on one structural object translates to zero restrictions on other objects. In Section IV below, we modify the necessary and sufficient condition of Theorem 7 of RWZ by offering a systematic way to detect redundancy of the imposed identifying restrictions.

\subsection{Detecting the failure of global identification}
\label{sec:DetectFail}

In their Theorem 6, RWZ provides an alternative necessary and sufficient condition for exact identification of SVARs. If we properly take into account that the imposed zero restrictions imply zero restrictions on other objects, this alternative approach can correctly detect a lack of global identification. We illustrate how in our example. 

For $1 \leq j \leq n$ and any $k \times n$ matrix $X$, let $M_j(X)$ be a $(k+j) \times n$ matrix defined by
\begin{equation*}
	M_j(X) = \begin{pmatrix} Q_j X \\ I_{j \times j} \mspace{20mu} O_{j \times (n-j)} \end{pmatrix},
\end{equation*}
where $Q_j$ is a $k \times k$ matrix defined in (\ref{eq:restr}). Theorem 6 of RWZ provides a necessary and sufficient condition for exact identification through the rank conditions for $M_j\big(f(A_0,A_+)\big)$. 

\bigskip	

\noindent \textbf{Theorem 6 in RWZ}: \textit{Consider an SVAR with admissible and strongly regular restrictions represented by $R$. The SVAR is exactly identified if and only if the total number of restrictions is equal to $n(n-1)/2$ and for some $(A_0,A_+) \in R$, $M_j\big(\fA\big)$ is of rank $n$ for $1 \leq j \leq n$.}

\bigskip 
   
As for the necessary and sufficient condition in Theorem 7, the condition in Theorem 6 is correct, but is limited to admissible and strongly regular transformations
$f(\cdot)$. As we have just seen here before, the analysis could be extended to a broader class of transformations, where the technical assumption on the full rank
of the first derivative of $f(\cdot)$ can be substituted by the less stringent and more practical condition of non-redundancy. Although the utilization of Theorem 6
in RWZ is precluded by the failure of the regularity Condition \ref{def:reg}, it is useful to understand what happens in our example and whether the condition can 
be recovered even in the case of non regular transformations. 

In the current example, the total number of restrictions imposed is 3 and it meets the condition for the total number of restrictions with $n=3$. We hence focus on checking the rank condition for $M_j\big(\fA\big)$, $j=1,2,3$.
In this check, we substitute the following matrices into $\fA$:
\begin{equation}
\label{eq:ExRestrA}
\begin{array}{ccc}
A_0 = \left(\begin{array}{ccc}
a_{11} & a_{12} & a_{13}\\
0 & a_{22} & a_{23}\\
0 & a_{32} & a_{33}
\end{array}\right) & \text{ and } &
IR_0 = \left(\begin{array}{ccc}
\times & 0 & 0\\
\times & \times & \times\\
\times & \times & \times
\end{array}\right),
\end{array}
\end{equation}
where the symbol `$\times$' denotes the parameters in Eq. (\ref{eq:AinvR}). We obtain, if $a_{22}a_{33}-a_{32}a_{23} \neq 0$, 
\begin{equation}
\label{eq:RkExA}
        \begin{array}{lcl}
        M_1\big(\fA\big)=\left(\begin{array}{ccc}
        0 & a_{22} & a_{23}\\
        0 & a_{32} & a_{33}\\
        \hdashline[2pt/2pt]
        1 & 0 & 0\end{array}\right) & \hspace{2cm} & \rk(M_1)=3\\&&\\
        M_2\big(\fA\big)=\left(\begin{array}{ccc}
        a_{22}a_{33}-a_{32}a_{23} & 0 & 0\\
        \hdashline[2pt/2pt]
        1 & 0 & 0\\
        0 & 1 & 0\end{array}\right) & \hspace{2cm} & \rk(M_2)=2<3.
        \end{array}
\end{equation}
Hence, the rank condition of Theorem 6 in RWZ fails. This is consistent with the conclusion in our analysis above; the imposed restrictions uniquely pin down the first column vector of $P$, but not the second column vector of $P$. Thus, plugging in the expression of $\fA$ obtained analytically under the imposed restrictions, the rank condition of Theorem 6 of RWZ correctly detects the failure of global identification due to the redundancy among the imposed identifying restrictions. 

It is important to note that understanding analytically the whole set of constraints implied by the imposed restrictions is crucial to correctly performing the check of the rank condition in Theorem 6 of RWZ. For instance, in the current example, if we were not aware of the redundancy issue of the identifying restrictions and incorrectly let the $(1,3)$-element of $M_2\big(\fA\big)$ be an unknown potentially nonzero free parameter, we would have erroneously claimed that $M_2\big(f(A_0, A_+)\big)$ were of rank 3 and concluded that the exact identification holds. If the dimension of the SVAR is large,  exhaustively investigating and figuring out the entire set of constraints implied by the zero restrictions on $f(A_0,A_+)$ is challenging. In such a case, immediate implementation of the rank conditions of Theorem 6 of RWZ is limited.

\subsection{Other examples of transformation}
\label{sec:OtherEx}

In the previous illuminating example we have considered zero restrictions on $A_0$ and $IR_0$. The next two examples introduce VARs, of potential interest in empirical applications, where the transformation functions are not regular, and thus Theorem 7 of RWZ cannot apply.

\begin{ex}[Short-run zero restrictions on a monetary policy shock] 
\label{ex:Lags}
	In practically all triangular quarterly SVARs,\footnote{See, among many others, \cite{CEE05JPE} and \cite{BG06RESTATS}.} it is assumed a monetary policy shock to have no impact within the quarter it hits 
	the economy. In a monthly VAR, it does correspond 	to zero restrictions on $IR_0$, $IR_1$, $IR_2$ and $IR_3$ of the monetary policy shock on indicators of the real economy, like industrial production.
	If the DGP is, for instance, a VAR with just two lags, the regularity assumption in Condition \ref{def:reg} is no longer valid. To see this, it is sufficient to write the 
	transformation function as
	\begin{equation}
	\label{eq:ExLags}
		f(A_0,A_1,A_2) = \Big(IR_0^\prime,IR_1^\prime,IR_2^\prime,IR_3^\prime\Big)^\prime,\nonumber
	\end{equation}
	with $IR_h = \Big(A_0^{-1}J^\prime F^h J\Big)^\prime$, where 
	\begin{equation}
	\label{eq:ExLagsMat}
		F = \left(\begin{array}{cc}A_1A_0^{-1} & I_n \\ A_2A_0^{-1} & 0\end{array}\right) 
		\hspace{1cm}\mathrm{and} \hspace{1cm} 
		J = \left(\begin{array}{c}I_n \\ 0\end{array}\right).\nonumber
	\end{equation}
	The dimension of $f(A_0,A_1,A_2)$ is $4n\times n$. Then, the rank of $f^\prime(A_0,A_1,A_2)$ is at most $3n$, that is strictly less than $4n$.
	Condition \ref{def:reg}, thus, is not satisfied. 
\end{ex}

\begin{ex}[Short-run, long-run, cumulated long-run restrictions] 
\label{ex:Cumul}
	Consider an SVAR model mixing aggregate demand and supply shocks \'{a} la \cite{BlanQuah1}, as well as permanent productivity shocks as in \cite{KPSW91}.
	The VAR model, as in \cite{KPSW91}, can be made of a) non-stationary $I(1)$ (potentially cointegrated) variables, where the long-run dynamics is governed by a 
	combination of both permanent and transitory structural shocks; b) first differences of non-stationary $I(1)$ variables, and c) stationary variables. 
	The set of constraints, thus, could be represented by zero restrictions on the response of impact $IR_0$, on the long-run responses $IR_{\infty}$, defined as
    \begin{equation}
	\label{eq:IRinfty}
		IR_{\infty} = \Bigg(A_0^\prime-\sum_{l=1}^{p} A_l^\prime\Bigg)^{-1},\nonumber
	\end{equation}
	and on the cumulated impulse responses ($L_{\infty}^c=\sum_{h=0}^{\infty}IR_h$). 
	In the same vain as in the Example \ref{ex:Lags}, if the estimated reduced-form specification is characterized by just one lag, Condition \ref{def:reg} is no longer valid
	and the RWZ's machinery cannot be implemented.
\end{ex}

%%%%%%%%%%%%%%%%%%%%%%%%%%%%%%%%%%%%%  CONDITION %%%%%%%%%%%%%%%%%%%%%%%%%%%%%%%%%%%%%%%%%%
\section{Extending the necessary and sufficient condition for exact identification}
\label{sec:id}
%%%%%%%%%%%%%%%%%%%%%%%%%%%%%%%%%%%%%%%%%%%%%%%%%%%%%%%%%%%%%%%%%%%%%%%%%%%%%%%%%%%%%%%%%%%

In this section we provide a modified necessary and sufficient condition for exact identification that eliminates the redundancy issue erroneously invalidating Theorem 7 of RWZ and extend the condition to a broader class of transformation functions. Our proposal relies on the sequential feature of Algorithm 1 in RWZ and checks the rank condition for uniqueness of the $j$-th column vector $p_j$ for each $j = 1, \dots, n$.  
 
Given the reduced-form parameters $(B, \Sigma)$, choose $(A_0, A_+)$ to be an unrestricted set of structural parameters satisfying $\Sigma = (A_0^{\prime})^{-1} (A_0)^{-1}$ and $B=A_+ A_0^{-1}$, such as $A_0^{\prime} = \Sigma_{tr}^{-1}$ and $A_+ = B (\Sigma_{tr}^{-1})'$. Let
\begin{equation}
\label{eq:Qt}
\tilde{Q}_1 = Q_1 f(A_0, A_+),  \ \text{and} \      \tilde{Q}_j=\left(\begin{array}{c}
        Q_j\fA\\p_1^\prime\\\vdots\\p_{j-1}^\prime
        \end{array}\right) \ \text{for $j=2,\ldots,n$}.
\end{equation}
By Theorem 5 and Algorithm 1 of RWZ, the exact identification of SVARs follows if and only if, for almost every reduced-form parameters $(B, \Sigma)$, the orthogonality conditions $\tilde{Q}_jp_j=0$ combined with the sign normalization restrictions pin down a unique orthogonal matrix $P$. 

For $P$ to be uniquely determined, it is necessary to have $q_j = n-j$ for all $1 \leq j \leq n$. This is, however, not a sufficient condition, because if any of the orthogonal vectors $(p_1, \dots, p_{j-1})$ is linearly dependent on the row vectors of $Q_j f(A_0,A_+)$, a rank-deficient $\tilde{Q}_j$ fails to pin down a unique 
$p_j$. This is exactly the mechanism that caused the systematic failure of global identification in our illustrative counterexample. To rule out such rank-deficiency in the characterization of the global identification condition,  we introduce the following concept:

\begin{defin}[Non-redundant restrictions]
\label{def:RedRes}
    Given reduced-form parameter $(B,\Sigma)$, let $A_0^{\prime} = \Sigma_{tr}^{-1}$ and $A_+ = B (\Sigma_{tr}^{-1})'$. Identifying restrictions for an SVAR that
		are represented by zero restrictions $Q_j\fA e_j=0$, $j=1, \dots, n$, are \textit{non-redundant} at given reduced-form parameter point, $(B,\Sigma)$ if for
		every $j=2, \dots, n$, orthogonal vectors $(p_1, \dots, p_{j-1})$ are linearly independent of the row vectors of $Q_j\fA$, i.e., $\tilde{Q}_j$ defined in 
		Eq. (\ref{eq:Qt}) is full row-rank for all $j=2, \dots, n$.
\end{defin}

If the imposed zero restrictions are non-redundant and the rank condition of Theorem 7 in RWZ holds, we can guarantee 
\begin{equation}
\label{eq:RedRes}
			\rk(\tilde{Q}_j) = \rk \left(\begin{array}{c}Q_j\fA\\p_1^\prime\\\vdots\\p_{j-1}^\prime\end{array}\right)=n-1
\end{equation}
for all $j=1, \dots, n$. We can therefore solve for an orthonormal matrix $P$ uniquely by sequentially solving $\tilde{Q}_j p_j = 0$, for $j=1, \dots, n$. If non-redundancy of the imposed restrictions holds for almost any reduced-form parameter point $(B,\Sigma)$, we can achieve exact identification. We hence obtain the following theorem that modifies Theorem 7 of RWZ.

\begin{theo}[A necessary and sufficient condition for exact identification]
\label{theo:NecSuffCond}
        Consider an SVAR with admissible restrictions represented by $R$. The SVAR is exactly identified at the point $(A_0,A_+)\in R$ if and only if 
         $q_j=n-j$ for $\jn$ and the restrictions are non-redundant at $(A_0,A_+)$.
\end{theo}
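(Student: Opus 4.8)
The starting point is the reduction already recalled before the statement: by Theorem 5 and Algorithm 1 of RWZ, the SVAR is exactly identified if and only if, for almost every $(B,\Sigma)$, the orthogonality system $\Qtj p_j = 0$, $\|p_j\|=1$, together with the sign normalization, determines a unique $P\in\O$. The plan is therefore to establish, for a.e. $(B,\Sigma)$, the chain
\[
\text{unique } P \iff \rk(\Qtj)=n-1 \ \text{for all } \jn \iff q_j=n-j \ \text{for all } \jn \ \text{and non-redundancy}.
\]
For the first equivalence I would exploit the sequential structure of Algorithm 1: the set of admissible $P$ is the tower obtained by choosing $p_1$ on the unit sphere of $\ker\Qtj$ at $j=1$, then $p_2$ on the unit sphere of $\ker\Qtj$ at $j=2$, and so on, each sphere having dimension $\dim\ker\Qtj-1$. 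Existence of a solution (guaranteed under exact identification) forces $\dim\ker\Qtj\ge 1$ along the realized branch, and the tower collapses to a single point, after sign normalization, precisely when every sphere is $0$-dimensional, i.e. when $\dim\ker\Qtj=1$, equivalently $\rk(\Qtj)=n-1$, for every $j$. Conversely, if some $\Qtj$ has $\rk<n-1$, the corresponding sphere has positive dimension and yields a continuum of distinct $P$'s, which is exactly the failure illustrated by the counterexample.

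The second equivalence is linear-algebraic. By strong regularity $\fA$ sweeps a dense subset of the $k\times n$ matrices, so generically the $q_j=\rk(Q_j)$ retained rows of $Q_j\fA$ are linearly independent; stacking them on the orthonormal vectors $p_1',\dots,p_{j-1}'$ gives $\Qtj$ at most $q_j+(j-1)$ effective rows in $\Re^n$, whence $\rk(\Qtj)\le \min\{q_j+(j-1),\,n\}$. For the direction ``$\Leftarrow$'', $q_j=n-j$ makes the number of effective rows exactly $n-1$, and non-redundancy (Definition \ref{def:RedRes}, full row rank of $\Qtj$) upgrades this to $\rk(\Qtj)=n-1$. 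For ``$\Rightarrow$'', $\rk(\Qtj)=n-1$ forces $q_j+(j-1)\ge n-1$, i.e. $q_j\ge n-j$ for every $j$; combining with the necessary order condition of Theorem 6 of RWZ, $\sum_{j=1}^n q_j=n(n-1)/2=\sum_{j=1}^n(n-j)$, the inequalities must hold with equality, so $q_j=n-j$, and then $\rk(\Qtj)=n-1$ coincides with the number of effective rows, i.e. $\Qtj$ is of full row rank and the restrictions are non-redundant.

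Finally I would assemble the a.e. quantifiers. Each rank drop above occurs on the common zero set of the relevant $(n-1)\times(n-1)$ minors of $\Qtj$, which are polynomials in the entries of $\fA$ and hence real-analytic functions of $(B,\Sigma)$ through $A_0^\prime=\Choli$ and $A_+=B(\Choli)'$. A nontrivial real-analytic function vanishes only on a Lebesgue-null set, so each exceptional set is negligible and the rank conditions either hold for a.e. $(B,\Sigma)$ or fail on a full-measure set; this is what lets the ``for almost every $(B,\Sigma)$'' in the identification criterion line up with ``non-redundant at almost any $(B,\Sigma)$'' in the statement, and it is the formal content behind being able to check a single reduced-form draw. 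I expect this measure-theoretic coordination to be the main obstacle: one must confirm that the running frame $p_1,\dots,p_{j-1}$ enters $\Qtj$ real-analytically in $(B,\Sigma)$ so that its minors are genuinely real-analytic and not merely piecewise so, and that redundancy is a generic-or-null dichotomy rather than a positive-but-not-full-measure phenomenon. The two equivalences themselves, being an induction over columns and an elementary rank count, are routine by comparison.
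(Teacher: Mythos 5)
Your overall architecture coincides with the paper's: you reduce, via Theorem 5 and Algorithm 1 of RWZ, to the question of whether the recursively constructed orthonormal matrix $P$ is unique for almost every $(B,\Sigma)$, and you characterize uniqueness at each step by $\rk(\tilde{Q}_j)=n-1$; this is exactly the content of the paper's modified Lemmas \ref{lem:5}, \ref{lem:7} and \ref{lem:9}. One genuine variation is welcome: in the necessity direction you obtain $q_j=n-j$ by combining the pointwise bound $\rk(\tilde{Q}_j)\leq q_j+(j-1)$ with the counting condition $\sum_j q_j=n(n-1)/2$ from the necessity part of Theorem 6 of RWZ, whereas the paper simply re-runs the first part of RWZ's Lemma 9. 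Your counting route is shorter and sound, given that the paper (and your argument) accepts Theorem 6 of RWZ as valid.

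There is, however, a genuine gap in your final measure-theoretic paragraph, and you have correctly located it yourself. You claim the rank-deficiency loci are zero sets of real-analytic functions of $(B,\Sigma)$ because the relevant $(n-1)\times(n-1)$ minors of $\tilde{Q}_j$ are polynomials in the entries of $f(A_0,A_+)$. But under Conditions \ref{def:reg}--\ref{def:streg} the transformation $f$ is only assumed continuously differentiable, not real-analytic, so those minors are merely $C^1$ functions of $(B,\Sigma)$, and a nontrivial $C^1$ function can vanish on a set of positive but not full measure: the ``generic-or-null dichotomy'' does not follow from your argument for general admissible $f$ (it would for specific choices such as finite-horizon impulse responses, whose entries are rational in $(B,\Sigma)$, but the theorem is stated in the general setting). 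Two remarks on how this affects your proof. First, Theorem \ref{theo:NecSuffCond} itself does not need the dichotomy: the clause ``non-redundant at almost any $(B,\Sigma)$'' is part of the characterization, and your two equivalences, applied pointwise at each $(B,\Sigma)\in\tilde{\mathbb{P}}^R$ where a solution exists (which is almost everywhere under exact identification, by Definition \ref{def:exact}), already align the almost-sure quantifiers without any analyticity. So the sound portion of your proposal proves the theorem once the last paragraph is excised. Second, the dichotomy you are reaching for is precisely the content of the paper's separate Theorem \ref{theo:almost-sure}, which the paper proves not by analyticity but by its modified Lemma \ref{lem:6} (exhibiting one parameter point at which the matrices $M_j(f(A_0,A_+))$ have rank $n$) and then invoking the proof of Theorem 3 of RWZ, which delivers the empty-or-conull conclusion under regularity alone. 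If you want the finitely-many-draws checkability, you should route through that argument rather than through minors of $\tilde{Q}_j$; note also that your worry about whether the running frame $p_1,\dots,p_{j-1}$ enters $\tilde{Q}_j$ analytically is then moot, since that construction never needs to be globally analytic in $(B,\Sigma)$.
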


\begin{proof}
		Let the first $j-1$ shocks be identified. It means that the quantity
		\begin{equation}
		\label{eq:Pj}
			\setlength{\dashlinegap}{2pt}
			P_{j-1} = \left[\begin{array}{c:c:c:c}
							p_1 & p_2 & \cdots & p_{j-1}
						\end{array}\right]\nonumber
		\end{equation}
		is uniquely determined. If the \textit{j}-th shock is not identified, instead, then $p_j$ and $p_j^*$, by construction orthogonal to $P_{j-1}$,
		are both admissible, with $p_j^*$ any unit-length rotation of $p_j$.
		If $p_j$ is admissible, then
		\[
			Q_j\fA\,p_j=0
		\]
		and, similarly, if $p_j^*$ is admissible, then
		\[
			Q_j\fA\,p_j^*=0.
		\]
		In other words, both $p_j$ and $p_j^*$ belong to the null space of $Q_j\fA$, that, by assumption, is of dimension $j$.
		Moreover, by construction, they are orthogonal to $P_{j-1}$, too. The two vectors, thus, must be contained in the intersection of the two
		null spaces of $Q_j\fA$ and $P_{j-1}$, that is equivalent to the null space of the $(n-j)+(j-1)\times n$ matrix
		\[
			\tilde{Q}_j = \left(\begin{array}{c} 
											Q_j\fA\\ P_{j-1}^\prime
										\end{array}\right).
		\]
		Being $p_j$ and $p_j^*$ linearly independent, the null space of $\tilde{Q}_j$ must be at least of dimension two. 
		Now, using the rank-nullity theorem, we can say that $\rk\,\tilde{Q}_j\leq n-2$, and, thus, cannot be full. 
		This proves the sufficiency of the rank condition in the Theorem \ref{theo:NecSuffCond}.
		
		Proving the other direction of the condition is very simple. In fact, if the rank of $\tilde{Q}_j$ is equal to $n-1$ (that means that the 
		restrictions are not redundant), then the orthogonal complement of $\tilde{Q}_j$ will have dimension equal to one. Thus, there will be just two
		candidates for the vector $p_j$, but being one opposite to the other, only one will be retained. If this happens for every $j=\{1,\ldots,n\}$,
		there will be just one orthogonal matrix $P$ transforming the parameter point $(A_0,A_+)\in U$ into the parameter point $(A_0P,A_+P)\in R$,
		that satisfies the restrictions. Finally, the fact that this last result does correspond to global identification can be proved by using 
		Theorem 5 in RWZ, that does not require the restrictions to be neither regular nor strongly regular.
\end{proof}

In comparison to Theorem 7 of RWZ, our Theorem \ref{theo:NecSuffCond} adds the non-redundancy condition of the imposed restrictions as a part of necessary and sufficient condition. 
Importantly, removing Conditions \ref{def:reg} and \ref{def:streg} from the assumptions precludes the nice results in Theorem 3 in RWZ, saying that if the model is identified
in one point of $R$, it is identified almost everywhere in $R$. Our result, instead, is specific to the single point we are interested in. Building on and modifying Algorithm 1 of RWZ, the next algorithm correctly judges if exact identification
holds or not at the specific point $\big(B,\,\Sigma\big)\in\,R$, that could be the ML estimation, or any draw from the posterior distribution of the reduced-form parameters 
in a Bayesian VAR.

\begin{algo} 
\label{algo:Exact}
   Consider an SVAR with admissible restrictions represented by $R$ that satisfy $q_j=n-j$, for $\jn$. 
   Let $\big(B,\,\Sigma\big)$, be any reduced-form parameters.

   Perform the following steps:
        
     \begin{enumerate}
        \item Let $A_0^{\prime} = \Sigma_{tr}^{-1}$ and $A_+ = B (\Sigma_{tr}^{-1})'$, where $\Sigma_{tr}$ is the lower-triangular Cholesky factor of $\Sigma$. 
        \item Sequentially check the rank conditions for non-redundancy, i.e., check if $\rk(\tilde{Q}_j) = n-1$ holds, 
                where $\tilde{Q}_1=Q_1\fA$ and 
				\begin{equation}
                \label{eq:QtAlgo}
									\Qtj=\left(\begin{array}{c}Q_j\fA\\p_1^\prime\\\vdots\\p_{j-1}^\prime\end{array}\right)
                \end{equation}
                for $j = 2, \dots, n$, and $p_j$ is an $n \times 1$ unit-length vector satisfying $\tilde{Q}_j p_j = 0$ which is unique (up to sign normalization) if 
								$\mathrm{rank}(\tilde{Q}_{j})=n-1$ holds for all preceding $j=1, \dots, j-1$. 
     \end{enumerate}
  If the reduced-form parameter point passes Step 2 of the current algorithm, we conclude that the imposed identifying restrictions $R$ achieve exact identification. 
	 If not, we conclude that the imposed identifying restrictions do not achieve exact identification. 
\end{algo} 

The construction of orthonormal vectors $p_1, \dots, p_n$ by solving $\tilde{Q}_j p_j = 0$ sequentially for $j=1, \dots, n$, as incorporated in Step 2 of Algorithm \ref{algo:Exact}, is proposed in  Algorithm 1 of RWZ. For the purpose of checking exact identification, the important feature of our algorithm is the step of checking $\mathrm{rank}(\tilde{Q}_j) = n-1$ for all $j=1, \dots, n$. This extra step, which is absent in Algorithm 1 of RWZ, is necessary to detect failure of exact identification due to redundancy of the identifying restrictions.  

%The previous Algorithm 1 acts both as a check for the identification of the SVAR, controlling for the rank of $\Qtj$, for $\jn$, and as a tool for estimating the parameters of the structural form such that, starting from any set of $\Atot\in U$, we obtain the unique corresponding $(A_0P,\,A_+P)\in R$.

%%%%%%%%%%%%%%%%%%%%%%%%%%%%%%%%%%%%  EXAMPLES  %%%%%%%%%%%%%%%%%%%%%%%%%%%%%%%%%%%%%%%%%%%
\section{Application}
\label{sec:exs}
%%%%%%%%%%%%%%%%%%%%%%%%%%%%%%%%%%%%%%%%%%%%%%%%%%%%%%%%%%%%%%%%%%%%%%%%%%%%%%%%%%%%%%%%%%%

In this section we show a couple of examples that, in two different dimensions, emphasize the impact of our proposal. The former is based on an influential work by \cite{Gali92}, who investigates the US economy through an SVAR model with different kinds of restrictions. The latter, instead, investigates global identification in proxy-SVAR models, with an application to fiscal policy shocks.\footnote{The code and data can be downloaded from the following repository:
\href{https://drive.google.com/file/d/18vGcSimY9-g1rLO4ECrVl6qdatlE8Z-6/view?usp=drive_link}{codes and datasets}.}

\subsection{An IS-LM model for the US economy}
\label{sec:ISLM}

\cite{Gali92} is a seminal contribution in the macroeconomic literature that attempts to understand the dynamics of US macroeconomic variables in the postwar period. Within the augmented IS-LM framework, he identifies four shocks as main drivers of the fluctuations observed in some macroeconomic variables: money supply, money demand, IS and aggregate supply shocks. The identification of such shocks is obtained as a combination of zero restrictions on the structural parameters of an SVAR model for the following variables: log of real GNP ($y_t$); short term interest rate ($i_t$); log of CPI ($p_t$) and log of M1 ($m_t$). Given the time series properties of the variables, the author comes up with the following transformations, that do represent the information set in the VAR model:
\begin{equation}
    \label{eq:GaliVariables}
    x_t = \big(\Delta y_t,\,\Delta i_t,\,i_t-\Delta p_t,\,\Delta m_t-\Delta p_t\big)^\prime\nonumber
\end{equation}
that is treated as a covariance stationary stochastic process.

\begin{table}[htbp]
\centering
\caption{Identifying Restrictions}
\label{tab:gali_identification}
\begin{tabular}{clc}
\hline\hline
& \vspace{0.1cm} &\\
& Long-run restrictions & \\
\hline\\
& R1: no long-run effects of money supply shocks on GNP & \\
& R2: no long-run effects of money demand shocks on GNP & \\
& R3: no long-run effects of IS shocks on GNP & \\
& & \\
& Short-run restrictions & \\
\hline\\
& R4: no contemporaneous effect of money supply shocks on output & \\
& R5: no contemporaneous effect of money demand shocks on output & \\
& R6: contemporaneous prices do not enter the money supply rule &\\
\hline\hline
\end{tabular}
\end{table}

The identifying restrictions are described in Table \ref{tab:gali_identification}. In terms of the notation used in this paper, the restrictions directly involve elements on the parameters of the structural form ($A_0$), but also transformation of them, like the on-impact response ($IR_0=A_0^{-1}$) and the long-run response ($IR_\infty$). Specifically, the restrictions can be written in terms of the transformation function $f(A_0,A_+)$ as
\begin{equation}
    \label{eq:Gali_f}
    f(A_0,A_+) = \left(\begin{array}{c}A_0\\IR_0\\IR_\infty\end{array}\right) 
               = \begin{array}{cc} & \begin{array}{cccc} \varepsilon_t^{ms} & \varepsilon_t^{md} & \varepsilon_t^{is} & \varepsilon_t^s\end{array}\\
                 \begin{array}{c}\Delta y_t\\\Delta i_t\\i_t-\Delta p_t\\\Delta m_t-\Delta p_t\\\Delta y_t\\\Delta i_t\\i_t-\Delta p_t\\\Delta m_t-\Delta p_t\\\Delta y_t\\\Delta i_t\\i_t-\Delta p_t\\\Delta m_t-\Delta p_t\\\end{array} &
                 \left[\begin{array}{cccc}
                      a_{11} & a_{12} & a_{13} & a_{14}\\
                      a_{21} & a_{22} & a_{23} & a_{24}\\
                      a_{31} & a_{32} & a_{33} & a_{34}\\
                      -a_{31}& a_{42} & a_{43} & a_{44}\\
                      \hdashline
                      0      & 0      & \times & \times\\
                      \times & \times & \times & \times\\
                      \times & \times & \times & \times\\
                      \times & \times & \times & \times\\
                      \hdashline
                      0      & 0      & 0      & \times\\
                      \times & \times & \times & \times\\
                      \times & \times & \times & \times\\
                      \times & \times & \times & \times\\
                 \end{array}\right]
                 \end{array}
\end{equation}
The $f(A_0,A_+)$ matrix is of dimension $(k\times n)$, where here $k=3n$ and $n=4$. In order to see whether the function is regular, we need to calculate the first derivative of $f(A_0,A_+)$ and determine its rank, which should be equal to $nk=3n^2$ for all $(A_0,A_+)\in U$.  
 
Clearly, for the same reason as in the previous Examples \ref{ex:Lags} and \ref{ex:Cumul} the rank is deficient for any $(A_0,A_+)\in U$ if the VAR is characterized by just one lag. Moreover, independently of the number of lags, it is not so simple to calculate the first derivative of a rather complicate matrix function like $f(A_0,A_+)$ in Eq. (\ref{eq:Gali_f}) and to establish whether its rank is full for all the elements in its domain. 

What we offer in this paper is a simple check that works even if the transformation function is not regular and that can be easily implementable in an algorithm that already represents a fundamental step in the estimation process.

The selection matrices $\bar{Q}_1$, $\bar{Q}_2$, $\bar{Q}_3$ and $\bar{Q}_4$, obtained from the respective $Q_j$ matrices, $j=1,\ldots,4$, by removing the zero rows, are defined as
\begin{eqnarray}
    \bar{Q}_1 & = & \left(\begin{array}{cccccccccccc}
                        0	&	0	&	1	&	1	&	0	&	0	&	0	&	0	&	0	&	0	&	0	&	0	\\
                        0	&	0	&	0	&	0	&	1	&	0	&	0	&	0	&	0	&	0	&	0	&	0	\\
                        0	&	0	&	0	&	0	&	0	&	0	&	0	&	0	&	1	&	0	&	0	&	0
                    \end{array}\right)\nonumber\\[10pt]
    \bar{Q}_2 & = & \left(\begin{array}{cccccccccccc}
                        0	&	0	&	0	&	0	&	1	&	0	&	0	&	0	&	0	&	0	&	0	&	0	\\
                        0	&	0	&	0	&	0	&	0	&	0	&	0	&	0	&	1	&	0	&	0	&	0	\\
                    \end{array}\right)\nonumber\\[10pt]
    \bar{Q}_3 & = & \left(\begin{array}{cccccccccccc}
                        0	&	0	&	0	&	0	&	0	&	0	&	0	&	0	&	1	&	0	&	0	&	0	\\
                    \end{array}\right).\nonumber
\end{eqnarray}
The fourth shock, $\varepsilon_t^s$, presenting no restrictions, is characterized by an empty matrix $\bar{Q}_4=\left(\,\,\right)$.

Here we implement our Algorithm \ref{algo:Exact} to check for the global identification in the original SVAR presented in \cite{Gali92}. We first estimate through maximum likelihood a gaussian VAR(4) as in Eq. (\ref{eq:VARc}) for the sample 1955-87 and obtain the parameters $B$ and $\Sigma$. In Step 1, then, we fix $A_0^\prime=\Sigma_{tr}^{-1}$ and $A_+ = B(\Sigma_{tr}^{-1})^\prime$. According to Step 2, for $j=1$, we have 
\begin{equation}
    \label{eq:GaliQt1}
    \tilde{Q}_1 = \bar{Q}_1 f(A_0, A_+) =
                  \left(\begin{array}{cccc} 
                    0.119	&	0.145	&	0.766	&	1.183	\\
                    0.801	&	0.113	&	0.065	&	-0.136	\\
                    1.261	&	-0.020	&	-0.095	&	0.116	
                  \end{array}\right),\nonumber
\end{equation}
that, being $\rk (\tilde{Q}_1)=n-1=3$, allows to globally identify the first structural shock $\varepsilon_t^{ms}$ and derive $p_1^\prime=\big(0.043,\,-0.845,\,0.487,\,-0.217\big)^\prime$. When $j=2$, we have
\begin{equation}
    \label{eq:GaliQt2}
    \tilde{Q}_2 = \left(\begin{array}{c}\bar{Q}_2 f(A_0, A_+)\\p_1^\prime\end{array}\right) =
                  \left(\begin{array}{cccc} 
                    0.801	&	0.113	&	0.0645	&	-0.136	\\
                    1.261	&	-0.020	&	-0.095	&	0.116	\\
                    0.043	&	-0.845	&	0.487	&	-0.217	
                  \end{array}\right),\nonumber
\end{equation}
for which $\rk (\tilde{Q}_2)=n-1=3$, allowing us to obtain $p_2=\big(0.005,\,0.274,\,0.745,\,0.608\big)^\prime$. For $j=3$, we have
\begin{equation}
    \label{eq:GaliQt3}
    \tilde{Q}_3 = \left(\begin{array}{c}\bar{Q}_3 f(A_0, A_+)\\p_1^\prime\\p_2^\prime\end{array}\right) =
                  \left(\begin{array}{cccc} 
                    1.261	&	-0.020	&	-0.095	&	0.116	\\
                    0.043	&	-0.845	&	0.487	&	-0.217	\\
                    0.005	&	0.274	&	0.745	&	0.608	
                  \end{array}\right)\nonumber
\end{equation}
which has $\rk (\tilde{Q}_3)=n-1=3$, leading to derive $p_3=\big( -0.111,\,-0.459,\,-0.449,\,0.759\big)^\prime$. Finally, for $j=4$, 
\begin{equation}
    \label{eq:GaliQt4}
    \tilde{Q}_4 = \left(\begin{array}{c}p_1^\prime\\p_2^\prime\\p_3^\prime\end{array}\right) =
                  \left(\begin{array}{cccc} 
                    0.043	&	-0.845	&	0.487	&	-0.217	\\
                    0.005	&	0.274	&	0.745	&	0.608	\\
                    -0.111	&	-0.459	&	-0.450	&	0.759	
                  \end{array}\right)\nonumber
\end{equation}
that, as before, features $\rk (\tilde{Q}_4)=n-1=3$, and allows to derive $p_4=\big(0.993,\,-0.016,\,-0.075,\,0.091\big)^\prime$. The orthogonal matrix mapping the parameters of the reduced form to the structural ones is obtained by collecting the $p_j$ vectors as $P=\big(p_1,\,p_2,\,p_3,\,p_4\big)$. Algorithm \ref{algo:Exact}, thus, allows us to establish that the SVAR model in \cite{Gali92} is globally identified at the maximum likelihood estimate $\big(B,\,\Sigma\big)$, regardless of whether the rather complicated regularity condition holds.

\subsection{Global identification in proxy-SVARs with an application to fiscal policy shocks}
\label{sec:ProxySVAR}

In this section we deal with proxy-SVARs and show how our procedure can be used to check whether the model is globally identified or not. As an example, we use a modified version of the seminal \cite{BlanPerotti02} SVAR, where a combination of zero restrictions and external proxies are used to identify the structural parameters.  

We define the proxy-SVAR model as in \cite{GKR19}, where $y_t$ is the $n\times 1$ vector of endogenous variables and $m_t$ the $k\times 1$ vector of external instruments. Let the SVAR as in Eq. (\ref{eq:SVAR}) be complemented by a model for the proxies as follows 
\begin{equation}
    \label{eq:proxySVAR}
    m_t^\prime\Gamma_0 = \varepsilon_t^\prime \Lambda + m_{t-1}^\prime \Gamma_1 +\ldots+ m_{t-\ell}^\prime \Gamma_\ell + \nu_t^\prime  
\end{equation}
where $\Gamma_0,\Gamma_1,\ldots,\Gamma_\ell$ are $k\times k$ matrices of parameters, with $\Gamma_0$ invertible, and $\Lambda$ is a $n\times k$ matrix. Moreover, we assume that $\big(\varepsilon_t^\prime,\nu_t^\prime\big)^\prime|\mathcal{F}_{t-1} \sim N\big(0_{(n+k)\times 1},\,I_{n+k}\big)$, with $\mathcal{F}_{t-1}$ denoting the information set at time $t-1$, which includes the lags of $y_t$ and $m_t$. Eq. (\ref{eq:proxySVAR}) is an SVAR model for the external instruments which also emphasizes that $E\big(m_t\varepsilon_t^\prime\big)=\Gamma_0^{-1\prime}\Lambda^\prime$. The validity of the proxies $m_t$ suggests they are correlated with $k$ structural shocks (\textit{relevance}), say, for simplicity, the last ones, while uncorrelated with the remaining $n-k$ structural shocks (\textit{exogeneity}).

Substituting the SVAR in Eq. (\ref{eq:SVARc}) into Eq. (\ref{eq:proxySVAR}) leads to
\begin{equation}
    m_t^\prime \Gamma_0 = y_t^\prime  A_0 \Lambda - x_t^\prime A_+ \Lambda  + m_{t-1}^\prime \Gamma_1 + \ldots + m_{t-\ell}^\prime \Gamma_\ell + \nu_t^\prime\nonumber
\end{equation}
from which
\begin{eqnarray}
    m_t^\prime & = & y_t^\prime A_0\Lambda \Gamma_0^{-1} - x_t^\prime A_+\Lambda\Gamma_0^{-1} + m_{t-1}^\prime \Gamma_1 \Gamma_0^{-1} + \ldots + m_{t-\ell}^\prime \Gamma_\ell \Gamma_0^{-1} + \nu_t^\prime                             \Gamma_0^{-1}\nonumber\\[10pt]
               & = & y_t^\prime D + x_t^\prime G + m_{t-1}^\prime H_1 + \ldots + m_{t-\ell}^\prime H_\ell + v_t^\prime\label{eq:VARproxy}
\end{eqnarray}
that is the reduced form specification for the external instruments, where we defined $D=A_0\Lambda\Gamma_0^{-1}$, $G=- A_+\Lambda\Gamma_0^{-1}$, $H_i=\Gamma_i\Gamma_0^{-1}$, $i=1,\ldots,\ell$, and $v_t^\prime = \nu_t^\prime \Gamma_0^{-1}$. The definition of the models for $y_t$ and $m_t$, as well as the assumptions on the two terms $\varepsilon_t$ and $\nu_t$, lead to the definition of the validity of the instruments as
\begin{equation}
    \label{eq:validity}
    E\big(m_t\varepsilon_t^\prime\big)=\Gamma_0^{-1\prime}\Lambda^\prime=D^\prime A_0^{-1\prime}=\Big(0_{k\times (n-k)}\,,\,\Psi\Big)
\end{equation}
where $\Psi$ is a full-rank $k \times k$ matrix, and where we have exploited the fact that $\Lambda \Gamma_0^{-1} = A_0^{-1}D$. Eq. (\ref{eq:validity}) simply states that linear combinations of the first $n-k$ columns of $A_0^{-1\prime}$ are equal to zero. Such linear combinations are given by the $k$ rows of $D^\prime$. Put differently, we can say that the validity of the external proxies introduces $k\times (n-k)$ zero restrictions on the function $f(A_0,A_+)$ that can be added to other zero restrictions to identify the SVAR. As for standard SVAR models, Theorem \ref{theo:NecSuffCond} and Algorithm \ref{algo:Exact} can be used to study exact identification in proxy-SVARs. \cite{AF19} provide a condition for the identification of proxy-SVARs (see their Proposition 2), but restricting to local identification, only, and to zero restrictions on the on-impact responses. Our condition, thus, is more general in terms of restrictions and focuses on the global identification of the model.  

As an empirical implementation, we consider the seminal contribution by \cite{BlanPerotti02} on fiscal policy shocks. The two authors, in a small-scale SVAR, identify government spending and tax shocks and characterize their dynamic effects on US activity in the postwar period. Identification is reached by a combination of zero restrictions on the parameters of the model and on an event study approach allowing the authors to calibrate the elasticities to output of government purchases and of taxes minus transfers. Although not directly discussed by the authors, a formal investigation, reported in Appendix \ref{app:BP}, shows that the \cite{BlanPerotti02} SVAR is globally identified. 

In what follows, we study the global identification of fiscal policy shocks by exploiting recent developments in proxy SVARs. Instead of calibrating some elasticities, we propose using external information to identify, at least, one of the two fiscal shocks. We consider the same set of endogenous variables as in the \cite{BlanPerotti02} setup, i.e.,
\begin{equation}
    \label{eq:BP_SVAR}
    y_t = \big(t_t,\,x_t,\,g_t\big)^\prime,
\end{equation}
where the three variables are logarithms of quarterly tax revenues, GDP, and public spending, respectively. The three associated structural shocks are $\varepsilon_t=\big(\varepsilon_t^t,\,\varepsilon_t^x,\,\varepsilon_t^g\big)$, i.e., the tax policy, the output shock and the fiscal spending shock, respectively. Moreover, we consider two potential fiscal instruments for the two fiscal shocks: the \cite{MR2014}'s series of unanticipated tax shocks identified through a narrative analysis of tax policy decisions ($p_t^t$); a series of unanticipated fiscal spending shocks recently introduced in \cite{ACCF23} ($p_t^g$). Given the potential lack of the exogeneity condition for the tax instrument, as advocated by \cite{Lewis21} and \cite{KKP26}, we consider only the fiscal spending instrument for our identification purposes and fix $m_t = \big(p_t^g\big)$. Similar results, in terms of identification, can be obtained by considering the tax instrument. 

The validity condition states that, other than being correlated with the shock of interest $\varepsilon_t^g$, the proxy must be uncorrelated with the first two shocks, i.e. the tax shock $\varepsilon_t^t$ and the output shock $\varepsilon_t^x$. As seen before, this restriction corresponds to $D^\prime A_0^{-1\prime} = \big(0_{1\times 2}\,,\,\Psi\big)$, where $\Psi$ is a non-zero scalar. Moreover, following \cite{BlanPerotti02}, we can add a further restriction stating that public spending does not enter the tax equation, i.e. $a_{3,1}=0$. The restricted transformation matrix, thus, will assume the form
\begin{equation}
    \label{eq:BP_f}
    f(A_0,A_+) = \left(\begin{array}{c}A_0\\IR_0\end{array}\right) 
               = \begin{array}{cc}
                 & \begin{array}{ccc} \varepsilon_t^{t} & \varepsilon_t^{x} & \varepsilon_t^{g}\end{array}\\
                 \begin{array}{c}t_t\\x_t\\g_t\\t_t\\x_t\\g_t\end{array}
                 &
                \left[
                \begin{array}{ccc}
                a_{11} & a_{12} & a_{13}\\
                a_{21} & a_{22} & a_{23}\\
                0      & a_{32} & a_{33}\\
                \hdashline
                \tikz[remember picture,baseline=(current bounding box.center)]\node (a) {$\times$}; & \tikz[remember picture,baseline=(current bounding box.center)]\node (c) {$\times$}; & \times\\
                \times & \times & \times\\
                \tikz[remember picture,baseline=(current bounding box.center)]\node (b) {$\times$}; & \tikz[remember picture,baseline=(current bounding box.center)]\node (d) {$\times$}; & \times
                \end{array}
                \right]
                \end{array}
\end{equation}
\begin{tikzpicture}[overlay,remember picture]
\node[draw,rectangle,fit=(a)(b),inner sep=3pt] {};
\end{tikzpicture}
\begin{tikzpicture}[overlay,remember picture]
\node[draw,rectangle,fit=(c)(d),inner sep=3pt] {};
\end{tikzpicture}
where the vertical rectangles select the elements of the linear combination, given by $D$, that must be equal to zero. Important, the RWZ methodology cannot be used to check for the global identification, as the transformation function includes both $A_0$ and its inverse $IR_0=A_0^{-1\prime}$, as in the example in Section \ref{sec:contr}.

Using US data for the sample 1950:Q1 to 2006:Q4, we check whether the proxy SVAR for $y_t = \big(t_t,\,x_t,\,g_t\big)^\prime$, with $m_t = \big(p_t^g\big)$, is globally identified or not. All variables are per capita, deflated by the GDP deflator, and expressed in logarithms. As in \cite{MR2014}, \cite{CK17} and \cite{AFN25}, the VAR specification includes four lags, a constant and a linear trend. A zero restriction on the simultaneous matrix $A_0$, complemented by the two zero restrictions provided by the exogeneity of the external instrument, are those described in Eq. (\ref{eq:BP_f}), where the transformation function $f(A_0,A_+)$ has as arguments $A_0$ and the transpose of its inverse $IR_0=A_0^{-1\prime}$. 

The estimation of the reduced form of the proxy SVAR model described in Eq.s (\ref{eq:VARc}) and (\ref{eq:VARproxy}) can be easily performed through maximum likelihood, obtaining the parameters $B$, $\Sigma$ and $D$. Specifically, we obtain
\[
    \hat{D} = \big(-0.0011\,,\,0.0042\,,\,1.0466\big)^\prime
\]
which constitutes an essential element for fixing the restrictions. Thus, the selection matrices $\bar{Q}_1$, $\bar{Q}_2$ and $\bar{Q}_3$, obtained, as before, from the respective $Q_j$ matrices, $j=1,\ldots,3$, by removing the zero rows, are defined as
\begin{eqnarray}
    \bar{Q}_1 & = & \left(\begin{array}{cccccccccccc}
                        0	&	0	&	1	&	0	&	0	&	0	\\
                        0	&	0	&	0	&	-0.0011 & 0.0042 & 1.0466
                    \end{array}\right)\nonumber\\[10pt]
    \bar{Q}_2 & = & \left(\begin{array}{cccccccccccc}
                        0	&	0	&	0	&	-0.0011 & 0.0042 & 1.0466\\
                    \end{array}\right)\nonumber
\end{eqnarray}
The third shock, $\varepsilon_t^g$, presenting no restrictions, is characterized by an empty matrix $\bar{Q}_3=\left(\,\,\right)$.
Following Step 1 of our Algorithm \ref{algo:Exact}, we fix $A_0^\prime=\Sigma_{tr}^{-1}$ and $A_+ = B(\Sigma_{tr}^{-1})^\prime$. According to Step 2, for $j=1$, we have 
\begin{equation}
    \label{eq:BPQt1}
    \tilde{Q}_1 = \bar{Q}_1 f(A_0, A_+) =
                  \left(\begin{array}{ccc} 
                       3.968  & -44.106 & 85.228\\
                       -0.00003 & 0.00003 & 0.01229
                  \end{array}\right),\nonumber
\end{equation}
that, being $\rk (\tilde{Q}_1)=n-1=2$, allows to globally identify the first structural shock $\varepsilon_t^{t}$ and derive $p_1^\prime=\big(0.996,\,0.094,\,0.002\big)^\prime$. When $j=2$, we have
\begin{equation}
    \label{eq:BPQt2}
    \tilde{Q}_2 = \left(\begin{array}{c}\bar{Q}_2 f(A_0, A_+)\\p_1^\prime\end{array}\right) =
                  \left(\begin{array}{ccc} 
                      -0.00003 & 0.00003 & 0.01229\\
                       0.996   & 0.094 & 0.002
                  \end{array}\right),\nonumber
\end{equation}
for which $\rk (\tilde{Q}_2)=n-1=2$, allowing us to obtain $p_2=\big(-0.094,\,0.996,\,-0.002\big)^\prime$. Finally, for $j=3$, 
\begin{equation}
    \label{eq:BPQt3}
    \tilde{Q}_3 = \left(\begin{array}{c}p_1^\prime\\p_2^\prime\end{array}\right) =
                  \left(\begin{array}{ccc} 
                        0.996 & 0.094 & 0.002\\
                        -0.094 & 0.996 & -0.002	\\
                  \end{array}\right)\nonumber
\end{equation}
that, as before, features $\rk (\tilde{Q}_3)=n-1=2$, and allows to derive $p_3=\big(-0.002,\,0.002,\,0.999\big)^\prime$. The orthogonal matrix mapping the parameters of the reduced form to the structural ones is obtained by collecting the $p_j$ vectors as $P=\big(p_1,\,p_2,\,p_3\big)$. Algorithm \ref{algo:Exact}, thus, allows us to establish that the fiscal policy proxy SVAR model is globally identified at the maximum likelihood estimate $\big(B,\,\Sigma,D\big)$, although the regularity condition does not hold.

%%%%%%%%%%%%%%%%%%%%%%%%%%%%%%%%%%%%%  CONCLUSION  %%%%%%%%%%%%%%%%%%%%%%%%%%%%%%%%%%%%%%%%
\section{Conclusion}
\label{sec:conclusion}
%%%%%%%%%%%%%%%%%%%%%%%%%%%%%%%%%%%%%%%%%%%%%%%%%%%%%%%%%%%%%%%%%%%%%%%%%%%%%%%%%%%%%%%%%%%

Based on a counterexample, this note demonstrates the importance of some technical assumptions to be met for the sufficiency claim in Theorem 7 of RWZ, commonly used by applied macro-economists because of its simplicity, to be correctly implemented. Analytical investigation of this counterexample reveals the issue of redundancy among the identifying restrictions, which the rank conditions of Theorem 7 of RWZ, in the absence of such regulatory assumptions, overlook. To rectify this, and to open to larger sets of admissible parameter restrictions, we present, firstly, a new necessary and sufficient condition for exact identification and, secondly, a computational algorithm that can correctly detect redundant identifying restrictions and is easy to implement in practice. We recommend this procedure to any researcher who wishes to check global identification of SVARs and proxy SVARs under their choice of equality identifying restrictions.

%%%%%%%%%%%%%%%%%%%%%%%%%%%  REFERENCES %%%%%%%%%%%%%%%%%%%%%%%%%%%%%%%%%%%%%%%%%%%

\newpage
\renewcommand{\thesection}{\Alph{section}}\setcounter{section}{0}

\setlength{\baselineskip}{12pt} 
\bibliographystyle{ecta}
%\bibliography{../../../MyBiblio/SVAR}
\bibliography{SVAR_global}
\addcontentsline{toc}{section}{Bibliography}

\newpage

%%%%%%%%%%%%%%%%%%%%%%%%%%%  APPENDIX %%%%%%%%%%%%%%%%%%%%%%%%%%%%%%%%%%%%%%%%%%%

\appendix

%%%%%%%%%%%%%%%%%%%%%%%%%%%%%% Appendix - PROOFS %%%%%%%%%%%%%%%%%%%%%%%%%%%%%%%%%%%%%%%%%

\newpage
\section{Appendix: Global Identification of the Blanchard and Perotti (2002)'s fiscal policy SVAR} 
\label{app:BP}

\citeauthor{BlanPerotti02}'s (\citeyear{BlanPerotti02}) paper represents one of the main contributions on 
the dynamic effect of fiscal policy shocks on the real economy. The authors consider a three-dimensional vector 
in the logarithms of quarterly taxes, spending, and GDP, all in real, per capita,
terms, $Y_t=(T_t,\,G_t,\,X_t)$. The equations of the model are 
\begin{eqnarray}
t_{t} &=& a_1x_t+a_2\epsilon_t^g+\epsilon_t^t\nonumber\\
g_{t} &=& b_1x_t+b_2\epsilon_t^t+\epsilon_t^g\nonumber\\
x_{t} &=& c_1t_t+c_2g_t+\epsilon_t^x\nonumber
\end{eqnarray}
where $u_t=(t_t,\,g_t,\,x_t)$ is the vector of reduced-form residuals and $\epsilon_t^t$, $\epsilon_t^g$ and $\epsilon_t^x$ are the mutually uncorrelated structural shocks they are interested in, with variances $\sigma_t^2$, $\sigma_g^2$ and $\sigma_y^2$, respectively. The model presents relationships among both observable variables and latent structural shocks. Using the terminology of \cite{AG} and \cite{LutBook06}, it is an AB-SVAR, for which only criteria for local identification do exist.\footnote{See \cite{AG}, \cite{LutBook06}, \cite{Hamilton94}. Recently, \cite{BL18} have proposed an identification criterion that does not depend on the unknown structural parameters.}

In any case, as it is, the SVAR is clearly not identified, and two more restrictions are needed. In fact the authors propose
three further restrictions to solve the identification issue: 1) $b_1=0$, 2) $a_1=2.08$ and, finally 3) either $a_2=0$ or $b_2=0$. 
Under these restrictions, and in particular when we fix $b_2=0$ (the situation is very similar for the alternative $a_2=0$), after some simple algebra shown in the proof of the following Proposition \ref{prop:BP}, the model becomes
\begin{equation}
    \label{eq:BPmatrix}
    \left(\begin{array}{ccc}\alpha_{11}&0&0\\\alpha_{21}&\alpha_{22}&\alpha_{23}\\\alpha_{31}&\alpha_{32}&\alpha_{33}\end{array}\right)
    \left(\begin{array}{c}t_t\\g_t\\x_t\end{array}\right) = 
    \left(\begin{array}{c}\e_t^t\\\e_t^g\\\e_t^x\end{array}\right)
\end{equation}
where the new structural shocks $\e=(\e_t^t\,,\e_t^g\,,\e_t^x)^\prime$ have been normalized to have unit variances, $E(\e_t\,\e_t^\prime)=I_3$, the $(3\times 3)$ identity matrix, and furthermore, where the $\alpha_{ij}$ parameters are defined as
\begin{equation}
    \label{eq:BPrestr}
    \begin{array}{lclcl}
    \alpha_{11}=1/\sigma_g&&&&\\
    \alpha_{21}=-2.08/\sigma_t&,& \alpha_{22}=1/\sigma_t &,& \alpha_{23}=-a_2/\sigma_t\\
    \alpha_{31}=-c_1/\sigma_x&,& \alpha_{32}=-c_2/\sigma_x &,& \alpha_{33}=1/\sigma_x.
    \end{array}
    \end{equation}
The model, thus, presents two restrictions in the first equation, one restriction in the second, and no restriction in the third. However, while the two zero restrictions in the first equation allow identifying the orthogonal vector $p_1$, the restriction on the second equation takes the form $\alpha_{21}=-2.08\alpha_{22}$ and apparently is not in line with the set of restrictions mentioned in RWZ. However, with a bit of algebra, shown in the proof of the following proposition, it is possible to prove that this last restriction corresponds to a zero constraint on the second orthogonal vector $p_2$. All these results are summarized in the following proposition.

\newpage

\begin{prop}
\label{prop:BP}
The Blanchard-Perotti's fiscal policy AB-SVAR is globally identified.
\end{prop}

\begin{proof} 
The corresponding AB-SVAR representation of the Blanchard-Perotti's model is
\begin{equation}
    \label{eq:BP}
    \left(\begin{array}{ccc}1&0&-a_1\\0&1&-b_1\\-c_1&-c_2&1\end{array}\right)
    \left(\begin{array}{c}t_t\\g_t\\x_t\end{array}\right) = 
    \left(\begin{array}{ccc}1&a_2&0\\b_2&1&0\\0&0&1\end{array}\right)
    \left(\begin{array}{c}\e_t^t\\\e_t^g\\\e_t^x\end{array}\right).\nonumber
    \end{equation}
As it is, the SVAR is clearly not identified, and two more restrictions are needed. The authors propose
three further restrictions to solve the identification issue: 1) $b_1=0$, $a_1=2.08$ and, finally 3) either $a_2=0$ or $b_2=0$. 
Under these restrictions, and in particular when we fix $b_2=0$ (the situation is very similar for the alternative $a_2=0$), the model becomes
\begin{eqnarray}
    t_{t} &=& 2.08x_t+a_2\epsilon_t^g+\epsilon_t^t\nonumber\\
    g_{t} &=& \epsilon_t^g\nonumber\\
    x_{t} &=& c_1t_t+c_2g_t+\epsilon_t^x\nonumber
\end{eqnarray}
whose SVAR representation is
\begin{equation}
    \left(\begin{array}{ccc}1&0&-2.08\\0&1&0\\-c_1&-c_2&1\end{array}\right)
    \left(\begin{array}{c}t_t\\g_t\\x_t\end{array}\right) = 
    \left(\begin{array}{ccc}1&a_2&0\\0&1&0\\0&0&1\end{array}\right)
    \left(\begin{array}{c}\epsilon_t^t\\\epsilon_t^g\\\epsilon_t^x\end{array}\right)\nonumber.
\end{equation}
As standard in the simultaneous equation systems, the normalization has been obtained by imposing unit coefficients on the main diagonal of the matrix collecting the relationships between the observable variables, leaving the structural shocks to have unconstrained variances $\sigma_t^2$, $\sigma_g^2$ and $\sigma_y^2$, respectively. An alternative way, more familiar in the SVAR literature\footnote{For the details, see \cite{HWZ07}, \cite{WZ03}, or \cite{RWZ10}.}, instead, is to normalize all the equations in order to have unit-variance structural shocks $\e=(\e_t^t\,,\e_t^g\,,\e_t^x)^\prime$.

This different normalization, together with simple algebra, allow to rewrite the model in terms of the RWZ's representation. Moreover, ordering the equations according to the number of restrictions, leads to representations provided in previous Eq.s (\ref{eq:BPmatrix}) and (\ref{eq:BPrestr}). Using the algorithm in RWZ, the two zero restrictions in the first equation allow to pin down a unique unit vector $p_1$ that will constitute the first column of the orthogonal matrix $P$. The restriction on the second equation, instead, can be written as $\alpha_{21}=-2.08\alpha_{22}$ and, with very simple algebra, it is possible to show that
\begin{eqnarray}
    \alpha_{21}=-2.08\alpha_{22} & \Rightarrow & \left(\Chol^{-1}\,e_1\right)^\prime p_2 = -2.08\,\left(\Chol^{-1}\,e_2\right)^\prime p_2\nonumber\\
    & \Rightarrow & \left[\left(\Chol^{-1}\,e_1\right)^\prime + 2.08\,\left(\Chol^{-1}\,e_2\right)^\prime\right] p_2 = 0. \nonumber
\end{eqnarray}
It is thus possible to see that this further restriction continues to represent a standard zero restriction in line with RWZ, that allows to uniquely pin down the second vector $p_2\perp p_1$ composing the orthogonal matrix $P$. Finally, in $\Re^3$, there will be a unique unit vector $p_3\perp p_2\perp p_1$. This proves that there exists a unique orthogonal matrix $P=(p_1,\,p_2,\,p_3)$ transforming the unrestricted reduced-form parameters to the structural ones satisfying the imposed restrictions. The Blanchard-Perotti AB-SVAR model, thus, is globally identified.

\end{proof}

Although the literature has proposed only criteria for local-identification of AB-SVARs, the previous proposition shows that the restrictions imposed by Blanchard and Perotti are sufficient for having global identification, even in the case of calibrated coefficients. However, this last point is extremely delicate and, as emphasized in \cite{BK20}, can lead to local identification, although the \textit{sequential} number of restrictions, as shown in Theorem 7 of RWZ, might be in favor of global identification.

\end{document}